\newcommand{\norm}[1]{\left\lVert{#1}\right\rVert}
\newtheorem{theorem}{Theorem}[section]
\newtheorem{lemma}[theorem]{Lemma}
\title{A Comment on and Correction to: Opinion dynamics in the presence of increasing agreement pressure}
\author{
Christopher Griffin\thanks{C. Griffin is with the Communications, Information and Navigation Office, Applied Research Laboratory, University Park, PA 16802, E-mail: griffinch@psu.edu}
}
\begin{document}

\maketitle
\begin{abstract} We identify a counter-example to the consensus result given in [J. Semonsen et al. Opinion dynamics in the presence of increasing agreement pressure. \textit{IEEE Trans. Cyber.}, 49(4): 1270-1278, 2018]. We resolve the counter-example by replacing Lemma 5 in the given reference with a novel variation of the Banach Fixed Point theorem which explains both the numerical results in the reference and the counter-example(s) in this note, and provides a sufficient condition for consensus in systems with increasing peer-pressure. This work is relevant for other papers that have used the proof technique from Semonsen et al. and establishes the veracity of their claims assuming the new sufficient condition.
\end{abstract}
%\begin{IEEEkeywords}
%consensus, agents, opinion formation, control
%\end{IEEEkeywords}

\section{Introduction}
In this technical note we correct and clarify a consensus result given in \cite{SGSR18}. This correction is relevant not only to the general literature but in particular to \cite{HZLR20}, which uses the same proof technique as \cite{SGSR18} and \cite{AVTM18,DMM18,CQLQ19,LLJD19,S19,ZLSR19,ZWAH20,AVTY20}, which cite \cite{SGSR18}. We show this proof method is incomplete due to the use of  a lemma drawn from outside the consensus literature. We provide a complete result and use this to prove a corrected version of Theorem 2 in \cite{SGSR18}. 

In \cite{SGSR18}, the authors study a consensus problem (see e.g., \cite{DeGroot74,Krause00,BHOT05,OFM07,BHT09,CFT12,JM14,MT14,Bindel2015248,7222408,7314902,7457653,7763766,7523325,7932950,7486092,7775015,7929330,7891027,7906607,7582369,7891010,7831431,7496995,7887685,7605476,7562413,7588177,7559818}) under an increasing peer-pressure function, which seems to drive system consensus. That paper assumes $N$ agents are arranged on a weighted graph with weighted adjacency matrix $\mathbf{A} \in \mathbb{R}^{N \times N}$ where self-weights are all $0$. Each agent has a time-varying state $x^{(i)} \in [0,1]$ (though any inputs in $\mathbb{R}$ would suffice) and the vector $\mathbf{x}_k \in \mathbb{R}^N$ is the vector of agent states at time $k$. Each agent also has a \textit{stubbornness} coefficient $s^{(i)}$ (also used in \cite{HZLR20}) and a preferred state ${x^{+^{(i)}}}$, which defines a fixed vector $\mathbf{x}^+ \in \mathbb{R}^{N\times N}$. Define a diagonal matrix $\mathbf{S}$ containing the $s^{(i)}$ and a diagonal matrix $\mathbf{D}$ of row-sums of $\mathbf{A}$. The the update function studied in \cite{SGSR18} is: 
\begin{equation}
\mathbf{x}_{(k)} = \left(\mathbf{S} + \rho_k\mathbf{D}\right)^{-1}\left(\mathbf{S}\mathbf{x}^{+} + \rho_k\mathbf{A}\mathbf{x}_{(k-1)}\right).
\label{eqn:Update}
\end{equation}
Here $\rho_k$ is a time-varying peer-pressure value. Let:
\begin{equation}
f_k(\mathbf{x}) = \left(\mathbf{S} + \rho_k\mathbf{D}\right)^{-1}\left(\mathbf{S}\mathbf{x}^{+} + \rho_k\mathbf{A}\mathbf{x}_{(k-1)}\right).
\end{equation}
In Lemma 3 of \cite{SGSR18} it is shown that $f_k(\mathbf{x})$ is a contraction with a fixed point:
\begin{displaymath}
\mathbf{x}_k^* = \left(\mathbf{S}+\rho_k\mathbf{L}\right)^{-1}\mathbf{S}\mathbf{x}^+,
\end{displaymath}
where: $\mathbf{L} = \mathbf{D} - \mathbf{A}$ is the Laplacian. It is noted in Theorem 1 of \cite{SGSR18} that: 
\begin{equation}
\lim_{k \to \infty}\mathbf{x}_k^* = \frac{\sum_{i = 1}^{N} s_{i}x_{i}^{+}}{\sum_{i = 1}^{N} s_{i}}{\bf 1}.
\label{eqn:ConvergencePoint}
\end{equation}
That is the fixed points of the individual contractions converge to the stubbornness weighted mean of the agents' preferred states. The authors state Lemma 5, taken directly from \cite{Lorentzen90,G91}:
\begin{lemma}[Theorem 1 of \cite{Lorentzen90} \& Theorem 2 of \cite{G91}] Let $\{f_n\}$ be a sequence of analytic contractions in a domain $D$ with $f_n(D) \subseteq E \subseteq D_0 \subseteq D$ for all $n$. Then $F_n = f_n \circ f_{n-1} \circ \cdots \circ f_1$ converges uniformly in $D_0$ and locally uniformly in $D$ to a constant function $F(z) = c \in E$. Furthermore, the fixed points of $f_{n}$ converge to the constant $c$.
\hfill\qed
\label{lem:Contraction2}
\end{lemma}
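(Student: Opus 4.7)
My plan is to route the argument through the Schwarz--Pick lemma and the hyperbolic metric $\rho_D$ on $D$: holomorphic self-maps of $D$ are $\rho_D$-nonexpanding, and if their image is forced to sit inside a subset that is relatively compact in $D$ they become strict contractions with a single rate $\lambda=\lambda(D,E)<1$. The hypothesis $f_n(D)\subseteq E$ is intended to supply exactly this uniformity, after which the convergence statement ought to follow from a normal-families argument and the fixed-point statement from a direct contraction estimate.

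Concretely: (i) extract a common rate $\lambda\in(0,1)$ for all $f_n$ via Schwarz--Pick together with the relative compactness of $E$ in $D$; (ii) note that $\{F_n\}$ maps $D_0$ into $E$ and is equicontinuous, hence a normal family on $D_0$; (iii) use the uniform contraction to get $\diam{F_n(D_0)}\le\lambda^{n-1}\diam{E}\to 0$ in $\rho_D$, forcing every subsequential limit to be constant; (iv) argue that all these constants coincide, giving $F_n\to c$ uniformly on $D_0$ and locally uniformly on $D$, with $c\in E$ since $F_n(D_0)\subseteq E$ at every stage. For the fixed points, each $f_n$ has a unique $p_n\in E$, and I would first deduce $f_n(c)\to c$ by combining $F_n(c)\to c$ with $F_n(c)=f_n(F_{n-1}(c))$ and the $\lambda$-contraction of $f_n$ applied to $F_{n-1}(c)\to c$; then
\[
\rho_D(p_n,c) \le \rho_D(f_n(p_n),f_n(c)) + \rho_D(f_n(c),c) \le \lambda\,\rho_D(p_n,c) + \rho_D(f_n(c),c),
\]
so $\rho_D(p_n,c)\le(1-\lambda)^{-1}\rho_D(f_n(c),c)\to 0$.

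The step I expect to be the main obstacle is (iv), uniqueness of the subsequential constant. For an ``inner'' composition $G_n=f_1\circ\cdots\circ f_n$ the argument is painless, because $G_{n+1}(D)=G_n(f_{n+1}(D))\subseteq G_n(E)\subseteq G_n(D)$ makes the images a nested sequence of vanishing diameter that collapses to a single point. For the ``outer'' composition $F_n=f_n\circ\cdots\circ f_1$ used in the lemma's statement, this nesting is lost: the images $F_n(D_0)$ all sit inside $E$ with vanishing diameters but need not be nested, so vanishing diameter alone does not pin down a single limit (for instance $f_n\equiv c_n$ constant yields $F_n\equiv c_n$, which converges only if $\{c_n\}$ does). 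Closing this gap will require either a stricter reading of ``analytic contraction'' than I have used, or some tail-regularity of $\{f_n\}$, and it is exactly this gap that I would expect Griffin's counter-example to exercise.
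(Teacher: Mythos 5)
Your step (iv) is not a defect in your argument so much as a defect in the statement itself, and you have put your finger on precisely the point this note is about. The paper gives no proof of \cref{lem:Contraction2}: it is transcribed from \cite{Lorentzen90,G91} and stated without proof, and the purpose of the note is to show that, as written, it does not support the use made of it in \cite{SGSR18}. Your steps (i)--(iii) are sound under the natural reading that $E$ is relatively compact in the hyperbolic domain $D$ (so that Schwarz--Pick supplies a uniform rate $\lambda<1$), and they reproduce the standard argument for the \emph{inner} composition $G_n=f_1\circ\cdots\circ f_n$, where the nesting $G_{n+1}(D)\subseteq G_n(D)$ really does collapse the images to a single point. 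For the \emph{outer} composition $F_n=f_n\circ\cdots\circ f_1$ that nesting is lost, exactly as you say, and no amount of care closes the resulting hole: your constant-map family $f_n\equiv c_n$ with $\{c_n\}$ divergent in $E$ already violates both conclusions of the lemma while satisfying every stated hypothesis, and \cref{eqn:Ex1} of the paper gives a family whose fixed points are all $0$ yet whose outer composition converges to the non-constant map $x\mapsto\phi\left(\tfrac{1}{10}\right)x$. So as a proof of the lemma as printed, your plan cannot be completed, and you were right to say so rather than paper over the uniqueness of the subsequential limit.

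What the paper substitutes for the lemma is a quantitative tail condition on the contraction rates rather than a single uniform rate: \cref{thm:Contraction} assumes convergence of the fixed points $\mathbf{x}_i^*$ \emph{as a hypothesis} (it is no longer a conclusion, for the reason your constant-map example exposes) and requires $\prod_i\alpha_i=0$, which is strictly weaker than the uniform bound $\alpha_i\le\lambda<1$ that Schwarz--Pick would give and strictly stronger than $\alpha_i<1$ alone. Its proof telescopes the same contraction-toward-the-fixed-point property that your final fixed-point estimate exploits, but carries the partial products $\prod_{i=j+1}^{k+1}\alpha_i$ along explicitly instead of majorizing them by a geometric sequence. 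The counter-example of \cref{sec:CounterExample}, with $\rho_k=2^{\sqrt{k}}$ and $\prod_k\alpha_k\approx 0.031>0$, is the opinion-dynamics incarnation of exactly the failure mode you predicted. Whether the original theorems of Lorentzen and Gill are themselves at fault or merely lose hypotheses in transcription is a question the paper explicitly declines to settle, and your analysis is compatible with either resolution.
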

\noindent \cite{SGSR18} then uses this to argue (in Theorem 2) that when:
\begin{displaymath}
G_k(\mathbf{x}) = (f_k\circ \cdots \circ f_1)(\mathbf{x}),
\end{displaymath} 
if $\rho_k \to \infty$, then:
\begin{displaymath}
\lim_{k\to\infty} G_k(\mathbf{x}_0) = \mathbf{x}^*.
\end{displaymath}
In the next section, we show this is not a complete statement and that the system may fail to converge for certain choices of increasing $\rho_k$. The failure in this case is due to the use of \cref{lem:Contraction2} (Lemma 5 of \cite{SGSR18}), which appears not to be valid in this case. We then prove a variation of the Banach fixed-point theorem, which explains our example's failure to converge and provides a correct sufficient condition for convergence, thus completing Theorem 2 of \cite{SGSR18}.

\section{Counter-Example to Consensus}\label{sec:CounterExample}
Consider the simple graph $K_2$ with the following inputs:
\begin{displaymath}
\mathbf{A} = \begin{bmatrix}
0 & 1 \\
1 & 0
\end{bmatrix} \qquad \mathbf{S} = \mathbf{D} = \mathbf{I}_2
\end{displaymath}
Let the initial condition and preferred agent states be given by $\mathbf{x}^+ = \langle{0.1,0.5}\rangle$. Assume we define the exponentially increasing peer-pressure function:
\begin{displaymath}
\rho_k = 2^{\sqrt{k}},
\end{displaymath}
which provides some numerical stability (i.e., does not blow up too quickly) but also shows exponential growth. Simulation of \cref{eqn:Update} shows the system does not converge to the expected $\mathbf{x^*} = \langle{3,3}\rangle$ as given by \cref{eqn:ConvergencePoint}, but instead oscillates about this point indefinitely. This is shown in \cref{fig:Oscillation}.
\begin{figure}[htbp]
\centering
\subfloat[Initial Behavior]{
\includegraphics[width=0.4\textwidth]{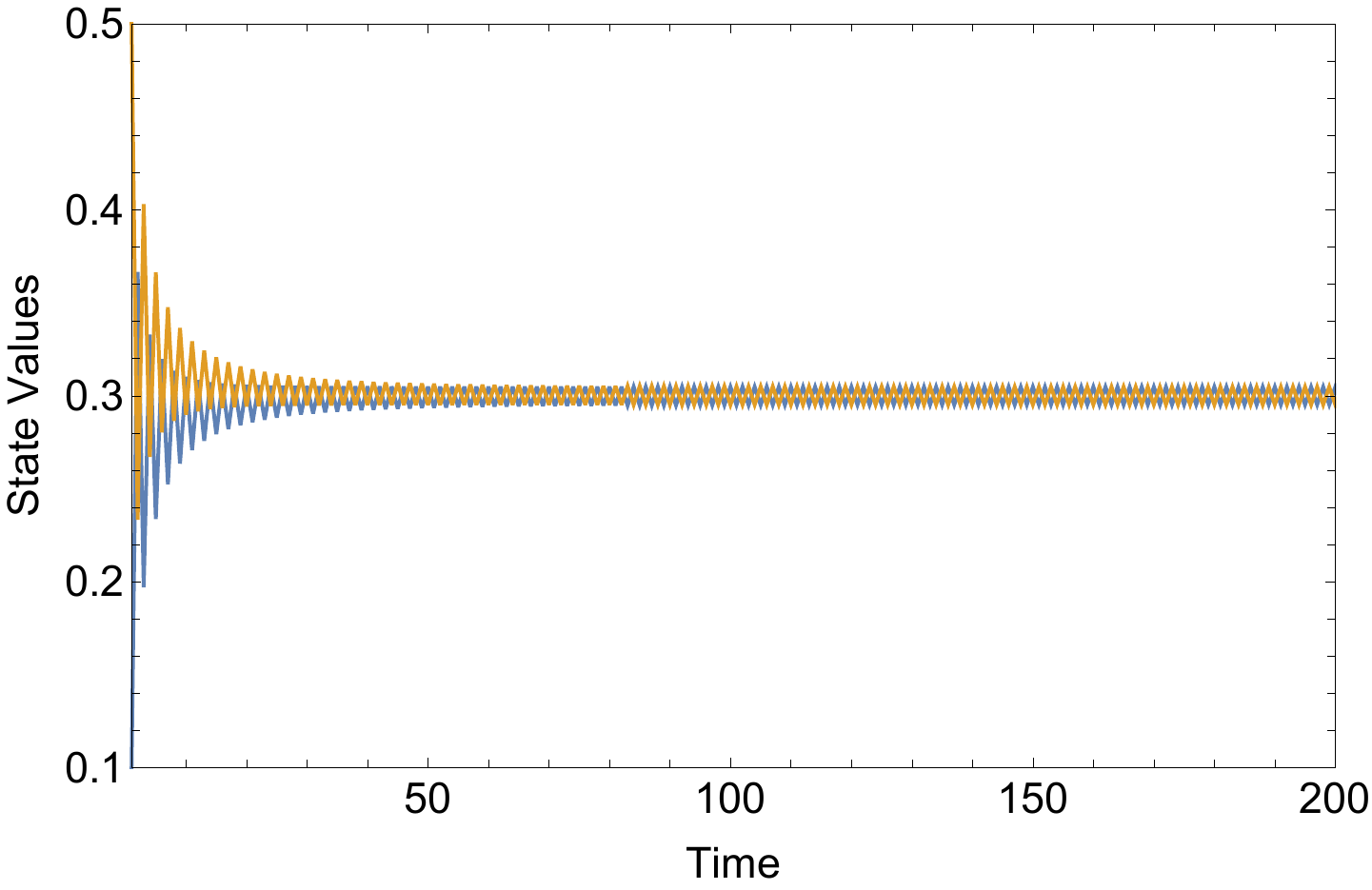}} \\
\subfloat[Long Run Oscillation]{
\includegraphics[width=0.4\textwidth]{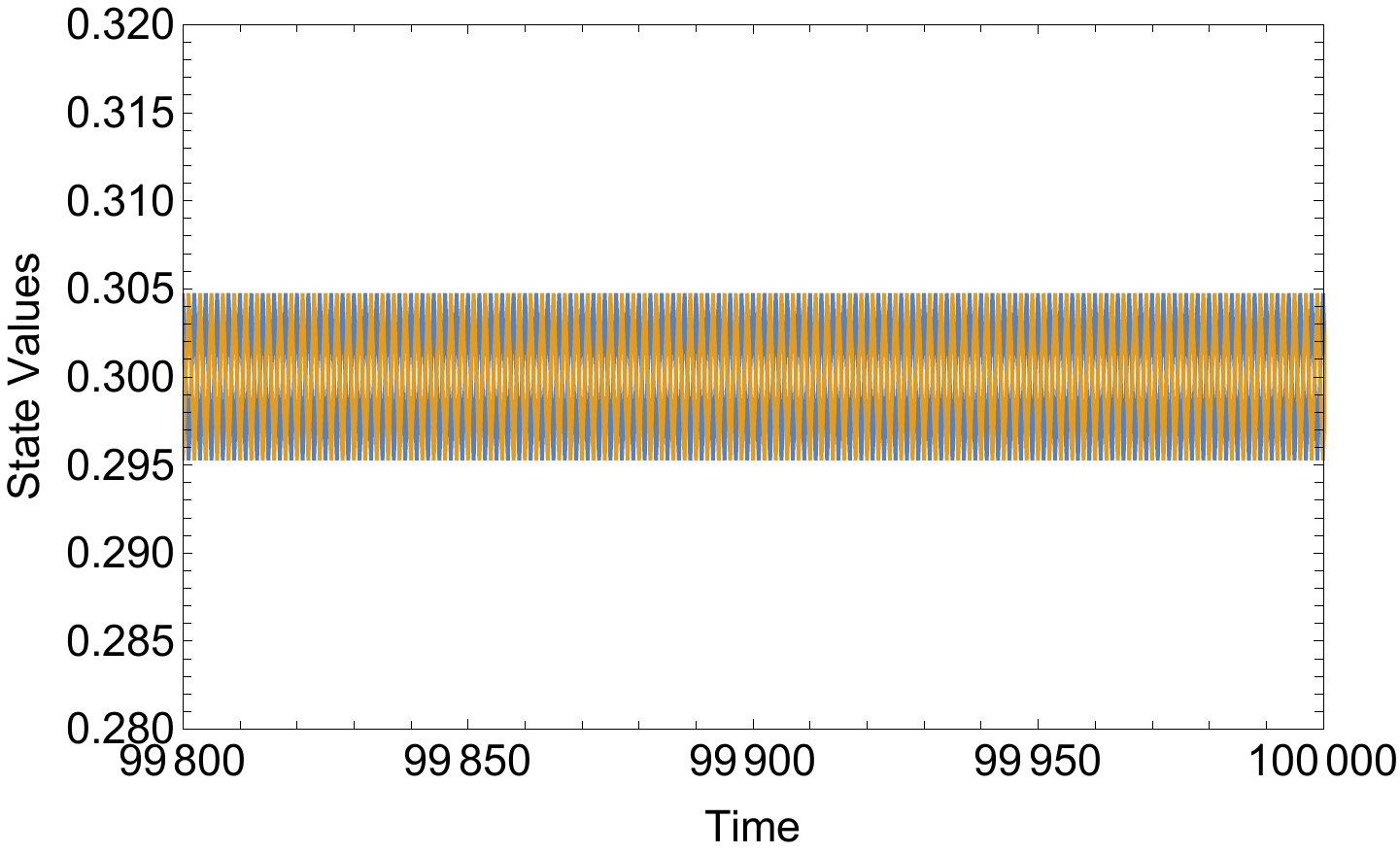}} 
\caption{Initially fast convergence around the mean point $\mathbf{x^*} = \langle{3,3}\rangle$ slows and becomes oscillation, showing neutral stability, rather than asymptotic stability.}
\label{fig:Oscillation}
\end{figure}
However, if we replace the peer-pressure function with:
\begin{displaymath}
\rho_k = k,
\end{displaymath}
then we see the system converges as expected from Theorem 1 of \cite{SGSR18}.
\begin{figure}[htbp]
\centering
\subfloat[Initial Behavior]{
\includegraphics[width=0.4\textwidth]{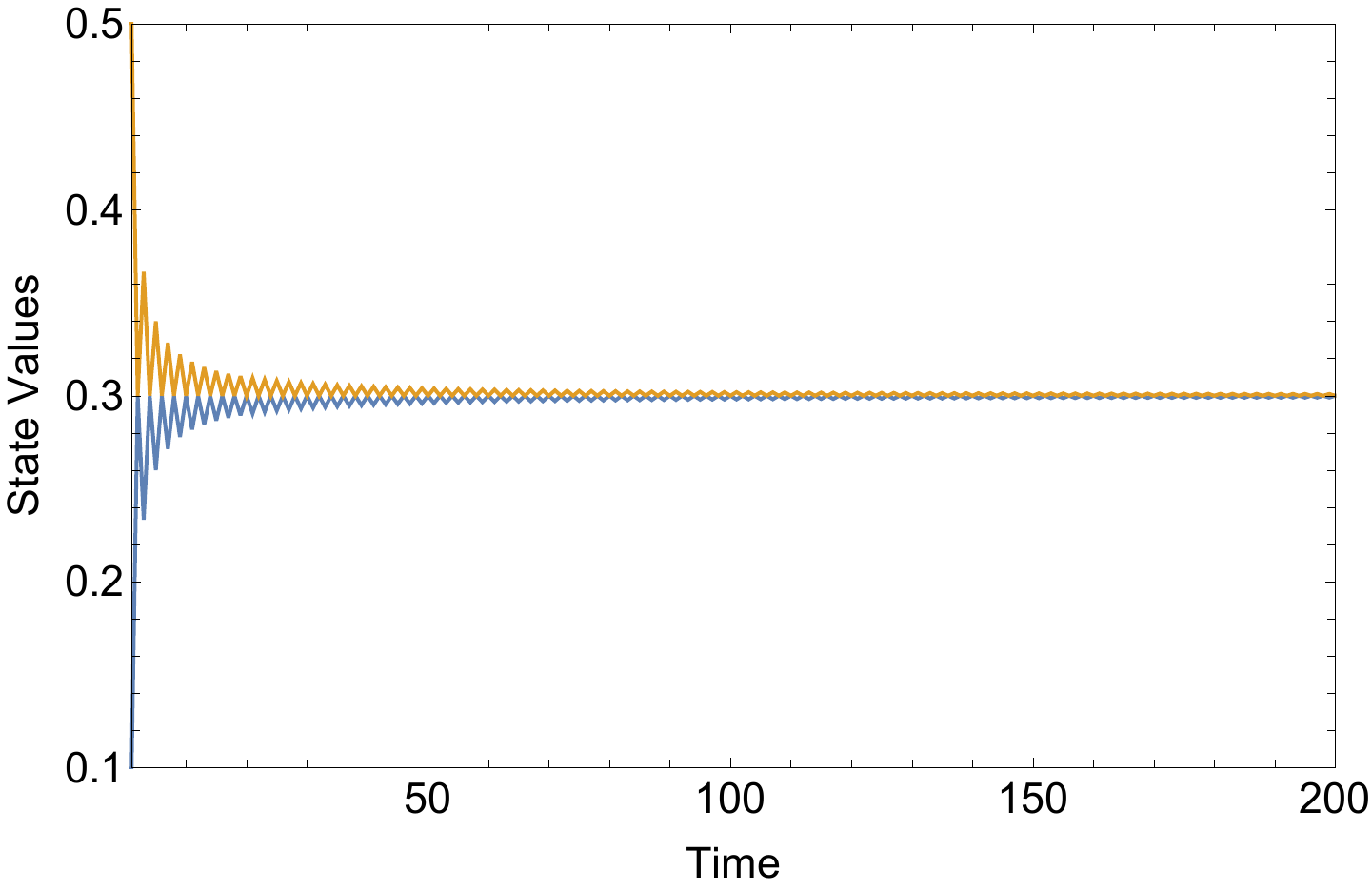}} \\
\subfloat[Long Run Oscillation]{
\includegraphics[width=0.4\textwidth]{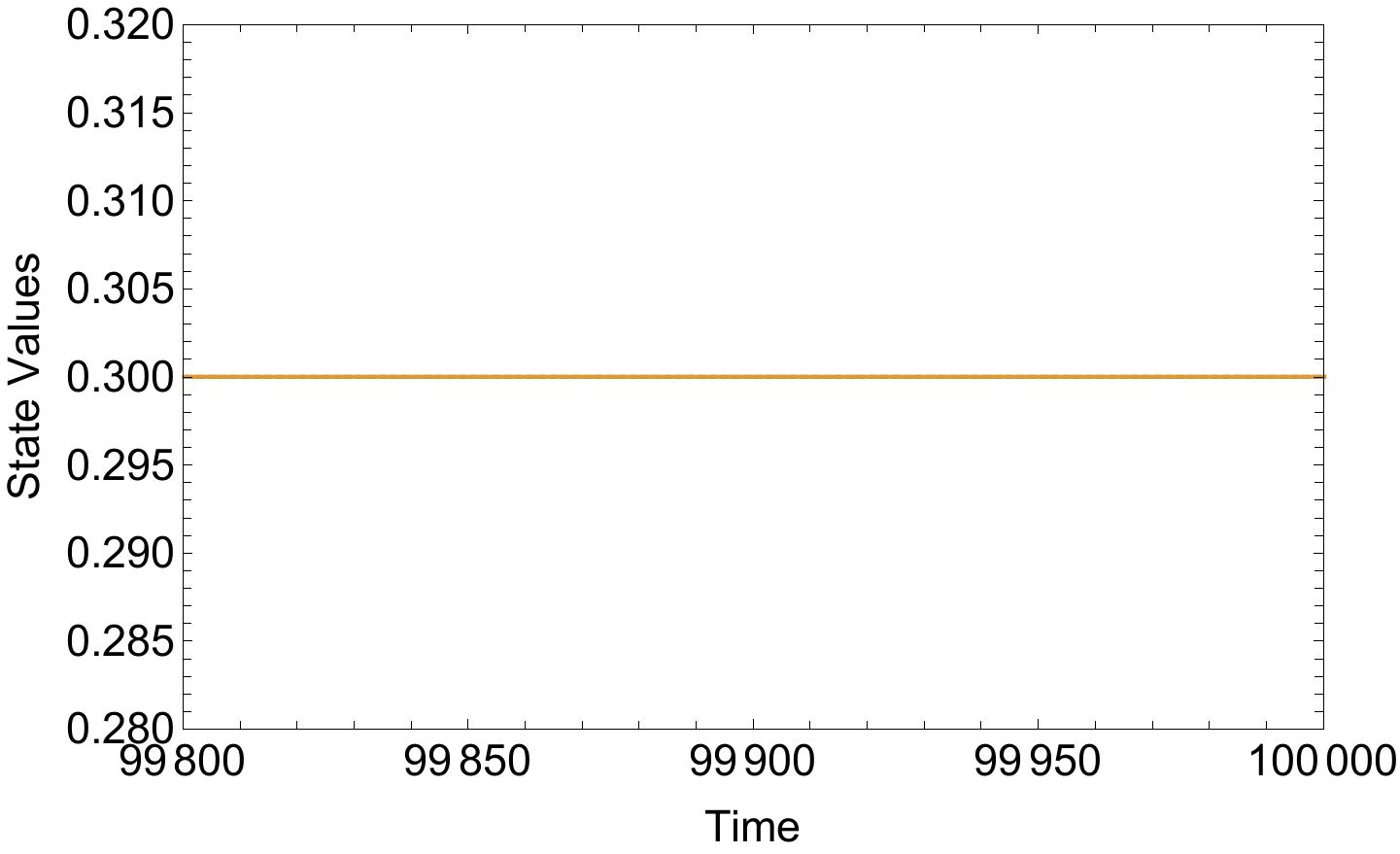}} 
\caption{Asymptotic convergence to the point $\mathbf{x^*} = \langle{3,3}\rangle$ is illustrated. As noted in \cite{SGSR18} this convergence is linear.}
\label{fig:Convergence}
\end{figure}
It is clear from this example that the issue with Theorem 1 is not the statement of the theorem, but its lack of qualification on the growth of $\rho_k$. This stems directly from Lemma 5 of \cite{SGSR18} (or Theorem 1 of \cite{Lorentzen90} \& Theorem 2 of \cite{G91}), which also does not qualify the analytic contraction to be used. However, it is clear that this pathology is an example of an even easier example. 

Consider the family of functions $f:\mathbb{R} \to \mathbb{R}$: 
\begin{equation}
f_k(x) = \left(1 - \frac{1}{10^n}\right) x.
\label{eqn:Ex1}
\end{equation}
Each $f_k(x)$ has a fixed point $x_k^* = 0$, thus the fixed points $x_k^*$ converge to $x^* = 0$ (tautologically). Moreover, each function contracts any interval containing $x = 0$ into itself. However, computing:
\begin{displaymath}
G_k(x) = (f_k\circ f_{k-1} \circ \cdots \circ f_1)(x) = \left(\prod_{i=1}^k\left(1 - \frac{1}{10^n}\right)\right)x,
\end{displaymath}  
we see that:
\begin{displaymath}
\lim_{k\to\infty} G_k(x) = x\cdot \lim_{k\to\infty}\prod_{n=1}^k\left(1 - \frac{1}{10^n}\right) = x \cdot \phi\left(\tfrac{1}{10}\right).
\end{displaymath}
Here $\phi(\cdot)$ is Euler's function derived from the q-Pochhammer symbol. We note that: 
\begin{displaymath}
\phi\left(\tfrac{1}{10}\right) \approx 0.89001
\end{displaymath}
Therefore, for $x \neq 0$, $G_\infty(x) \approx 0.89001x$, rather than $0$ as would be expected from the statement of Lemma 5 of \cite{SGSR18}. By contrast, if we consider the family of functions:
\begin{equation}
f_k(x) = \frac{k-1}{k} x,
\label{eqn:Ex2}
\end{equation}
then:
\begin{displaymath}
G_k(x)=\left(\prod_{n=1}^k\frac{n-1}{n}\right)x
\end{displaymath}
and
\begin{displaymath}
\lim_{k\to\infty} G_k(x) = x\cdot \lim_{k\to\infty}\left(\prod_{n=1}^k\frac{n-1}{n}\right) = 0,
\end{displaymath} 
as expected. We note this counter-example can be extended to the complex plane (the domain used in \cref{lem:Contraction2}). In this note, we will not consider the question of whether the statement of \cref{lem:Contraction2} needs to be clarified, however it may be that this result should be used with caution for future consensus results.
%We note that these examples also suggest that the statement of \cref{lem:Contraction2} requires further clarification, since the two families of contractions defined in \cref{eqn:Ex1} and \cref{eqn:Ex2} seem to satisfy the requirements set forth in the statement of the result. Consequently, this result should be used with caution when analyzing consensus systems.

In the next section, we construct a variation of the Banach fixed point theorem that handles the conditions set forth in \cite{SGSR18} and predicts the non-convergence of the counter-example using the intuition provided by the simpler cases.

\section{A Convergence Theorem}
\begin{theorem} Let $\{f_i\}_{i=1}^\infty$ be a family of mappings on a Banach space $X$ with norm $\norm{\cdot}$ so that each mapping $f_i$ has a unique fixed point $\mathbf{x}_i^*$ satisfying the property:
\begin{displaymath}
\norm{f_i(\mathbf{x}) - \mathbf{x}_i^*} \leq \alpha_i \norm{\mathbf{x} - \mathbf{x}_i^*} \qquad \forall \mathbf{x} \in X,
\end{displaymath}
where $0 \leq \alpha_i < 1$ for all $i$. Furthermore, suppose that:
\begin{displaymath}
    \lim_{i\to\infty} \mathbf{x}_i^* = \mathbf{x}^* \in X
\end{displaymath}
Then, if $\mathbf{x}_0 \in X$ and
\begin{displaymath}
\lim_{N \to \infty} \prod_{i=1}^N \alpha_i = 0,
\end{displaymath}
then:
\begin{displaymath}
  \lim_{k \to \infty} (f_k\circ f_{k-1} \circ \cdots \circ f_1)(\mathbf{x}_0) = \mathbf{x}^*
\end{displaymath}
\label{thm:Contraction}
\end{theorem}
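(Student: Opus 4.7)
The natural plan is to reduce the conclusion to showing $b_k := \norm{G_k(\mathbf{x}_0) - \mathbf{x}_k^*} \to 0$: by the triangle inequality $\norm{G_k(\mathbf{x}_0) - \mathbf{x}^*} \leq b_k + \norm{\mathbf{x}_k^* - \mathbf{x}^*}$, and the second term vanishes by hypothesis. The contraction-to-fixed-point property at step $k$ combined with a second triangle inequality comparing $\mathbf{x}_{k-1}^*$ and $\mathbf{x}_k^*$ gives, for $k \geq 2$,
\begin{displaymath}
b_k \;\leq\; \alpha_k\,\norm{G_{k-1}(\mathbf{x}_0) - \mathbf{x}_k^*} \;\leq\; \alpha_k\bigl(b_{k-1} + \delta_k\bigr), \qquad \delta_k := \norm{\mathbf{x}_{k-1}^* - \mathbf{x}_k^*},
\end{displaymath}
and the hypothesis $\mathbf{x}_k^* \to \mathbf{x}^*$ ensures $\delta_k \to 0$.

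Unrolling this linear recursion and writing $P_k := \prod_{i=1}^k \alpha_i$ yields
\begin{displaymath}
b_k \;\leq\; P_k\,b_1 \;+\; \sum_{j=2}^{k} \frac{P_k}{P_{j-1}}\,\delta_j.
\end{displaymath}
The leading $P_k b_1$ vanishes by the hypothesis $P_k \to 0$. To handle the sum, I would split at a threshold $N$ chosen so that $\delta_j < \eta$ for all $j > N$. The finite head with $2 \leq j \leq N$ consists of a bounded number of terms, each with weight $P_k/P_{j-1} \to 0$ as $k \to \infty$ for fixed $j$, so that portion tends to $0$; the remaining tail is majorized by $\eta \sum_{j=N+1}^{k} P_k/P_{j-1}$.

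The main obstacle is that the aggregate weight $\sum_{j=N+1}^{k} P_k/P_{j-1}$ need not be bounded uniformly in $k$ --- for example, with $\alpha_i = i/(i+1)$ it grows linearly --- so the naive majorization by absolute values threatens to overestimate $b_k$ by an unbounded factor. This is precisely the subtlety that distinguishes genuine consensus from the oscillatory failure exhibited in \cref{sec:CounterExample}, and it is the step on which I would concentrate: the argument plausibly has to exploit either a Silverman-Toeplitz-style regular summation principle, the signed cancellation available when the $f_i$ are affine (which covers the opinion-dynamics setting of \cite{SGSR18}), or an additional structural hypothesis, rather than being closable by a crude bound alone. Since the counter-example of \cref{sec:CounterExample} precisely violates $P_k \to 0$, any successful argument must use that hypothesis in a nontrivial way beyond the bound on $P_k b_1$.
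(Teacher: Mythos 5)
Your decomposition is the same one the paper uses: unrolling the recursion gives exactly \cref{eqn:MainEqn}, with your $P_k/P_{j-1}$ equal to the paper's $\prod_{i=j}^{k+1}\alpha_i$ up to an index shift, and your treatment of the leading term $P_k b_1$ matches the paper's item (1). Where you stop is where the paper hand-waves: it asserts that $N$ can be chosen so that every summand $\bigl(\prod_{i=j+1}^{N+1}\alpha_i\bigr)\norm{\mathbf{x}_{j+1}^*-\mathbf{x}_j^*}$ is below $\epsilon/(2(N+1))$ ``because of the combination of \cref{eqn:xlim,eqn:rholim}'' and offers no argument, while the number of summands grows with $k$ and the weights near $j=k$ are close to $1$. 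You decline to make that assertion and instead isolate the unbounded-row-sum obstruction, with the correct witness $\alpha_i=i/(i+1)$. So as a proof your proposal is incomplete --- the tail estimate is never closed --- but your diagnosis is accurate: that step is precisely where the paper's proof fails, and it cannot be repaired by a sharper estimate because the theorem as stated is false.

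To see this concretely, take $X=\mathbb{R}$, $\alpha_j=j/(j+1)$ (so $P_k=1/(k+1)\to 0$), fixed points $\mathbf{x}_j^*=(-1)^j/\log(j+2)\to 0$, and the fold maps $f_j(x)=\mathbf{x}_j^*+\sigma_j\,\alpha_j\,\lvert x-\mathbf{x}_j^*\rvert$ with constants $\sigma_j\in\{\pm 1\}$. Each $f_j$ is an $\alpha_j$-Lipschitz contraction with unique fixed point $\mathbf{x}_j^*$, and it satisfies the hypothesis of \cref{thm:Contraction} with equality. Choosing $\sigma_j$ inductively so that $G_j(\mathbf{x}_0)-\mathbf{x}_j^*$ has the same sign as $\mathbf{x}_j^*-\mathbf{x}_{j+1}^*$ forces $b_{j+1}=\alpha_{j+1}(b_j+\delta_{j+1})$ with equality, whence $b_k\geq\sum_{j=1}^k\frac{j}{k+1}\delta_j\geq\frac{k}{\log(k+2)}\to\infty$: every hypothesis holds and the conclusion fails. (A cheaper degenerate failure: if $\alpha_1=0$ then $\prod_i\alpha_i=0$ vacuously and constrains $f_2,f_3,\dots$ not at all.) The missing ingredient is therefore an additional hypothesis rather than a cleverer bound --- e.g.\ $\sum_j\norm{\mathbf{x}_{j+1}^*-\mathbf{x}_j^*}<\infty$, or the Toeplitz row-sum condition $\sup_k\sum_{j\leq k}\prod_{i=j}^{k}\alpha_i<\infty$, either of which closes your tail estimate in one line --- or else the affine structure of the specific maps in \cref{eqn:Update}, where a summation-by-parts/Ces\`aro argument supplies the signed cancellation you anticipate (for $\alpha_i=i/(i+1)$ in the affine scalar case one gets exactly the Ces\`aro mean of the $\mathbf{x}_j^*$, which does converge to $\mathbf{x}^*$). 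You should state one of these as an added assumption and then your outline completes; without it, no proof exists.
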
 
\begin{proof}
Define:
\begin{displaymath}
G_k = f_k \circ f_{k-1} \circ \cdots \circ f_1.
\end{displaymath}
By assumption:
\begin{displaymath}
\norm{(f_{k+1}\circ G_k)(\mathbf{x}_0) - \mathbf{x}_{k+1}^*} \leq \alpha_{k+1}\norm{G_k(\mathbf{x}_0) - \mathbf{x}_{k+1}^*} =
\alpha_{k+1}\norm{(f_k\circ G_{k-1})(\mathbf{x}_0) - \mathbf{x}_{k+1}^*}.
\end{displaymath}
Applying the triangle inequality to the last term we have:
\begin{displaymath}
\norm{(f_{k+1}\circ G_k)(\mathbf{x}_0) - \mathbf{x}_{k+1}^*} \leq 
\alpha_{k+1}\left( \norm{(f_k\circ G_{k-1})(\mathbf{x}_0) - \mathbf{x}_{k}^*} + \norm{\mathbf{x}_{k+1}^* - \mathbf{x}_{k}^*} \right).
\end{displaymath}
Applying similar logic, we see that:
\begin{displaymath}
\norm{(f_{k+1}\circ G_k)(\mathbf{x}_0) - \mathbf{x}_{k+1}^*} \leq 
\alpha_{k+1}\alpha_{k}\norm{G_{k-1}(\mathbf{x}_0) - \mathbf{x}_{k}^*} +
\alpha_{k+1} \norm{\mathbf{x}_{k+1}^* - \mathbf{x}_{k}^*}
\end{displaymath}
Repeating this argument, we see:
\begin{displaymath}
\norm{(f_{k+1}\circ G_k)(\mathbf{x}_0) - \mathbf{x}_{k+1}^*} \leq\\ 
\alpha_{k+1}\alpha_{k}\alpha_{k-1}\norm{G_{k-2}(\mathbf{x}_0) - \mathbf{x}_{k-1}^*} +\\
\alpha_{k+1}\alpha_k \norm{\mathbf{x}_{k}^* - \mathbf{x}_{k-1}^*} + 
\alpha_{k+1} \norm{\mathbf{x}_{k+1}^* - \mathbf{x}_{k}^*}
\end{displaymath}
We can continue in this way until we see that:
\begin{equation}
\norm{(f_{k+1}\circ G_k)(\mathbf{x}_0) - \mathbf{x}_{k+1}^*} \leq 
\left(\prod_{i=1}^{k+1}\alpha_i\right) \norm{f_1(\mathbf{x}_0) - \mathbf{x}_1^*} + \\\sum_{j=1}^k\left(\prod_{i={j+1}}^{k+1}\alpha_i\right)\norm{\mathbf{x}_{j+1}^* - \mathbf{x}_j^*}.
\label{eqn:MainEqn}
\end{equation}
By assumption of the theorem, the fixed points $\mathbf{x}_i^*$ converge and therefore for any $\epsilon > 0$ there is an $N > 0$ so that:
\begin{equation}
\norm{\mathbf{x}_{N}^* - \mathbf{x}_{N+1}^*} \leq \epsilon.
\end{equation}

Suppose we are given an $\epsilon > 0$, choose $N$ so that:
\begin{enumerate}
\item 
\begin{displaymath}
\left(\prod_{i=1}^{k+1}\alpha_i\right) \norm{f_1(\mathbf{x}_0) - \mathbf{x}_1^*} < \frac{\epsilon}{2(N+1)}.
\end{displaymath}
This is possible since we assume:
\begin{equation}
\lim_{N \to \infty} \prod_{i=1}^N \alpha_i = 0,
\label{eqn:rholim}
\end{equation}
and
\begin{equation}
\lim_{N \to \infty} \norm{\mathbf{x}_{N+1} - \mathbf{x}_N} = 0.
\label{eqn:xlim}
\end{equation}

\item For each $k$, 
\begin{displaymath}
\left(\prod_{i={j+1}}^{N+1}\alpha_i\right)\norm{\mathbf{x}_{j+1}^* - \mathbf{x}_j^*} < \frac{\epsilon}{2(N+1)}.
\end{displaymath}
This is possible because of the combination of \cref{eqn:xlim,eqn:rholim}. 

\item 
\begin{displaymath}
\norm{\mathbf{x}_{N+1}^* - \mathbf{x}^*} < \frac{\epsilon}{2}
\end{displaymath}
\end{enumerate}
Then from \cref{eqn:MainEqn} we have:
\begin{displaymath}
\norm{G_{n+1}(\mathbf{x}_0) - \mathbf{x}_{n+1}^*} < \frac{\epsilon}{2}.
\end{displaymath}
By one more application of the triangle inequality, we have:
\begin{displaymath}
\norm{G_{n+1}(\mathbf{x}_0) - \mathbf{x}^*} \leq 
\norm{\mathbf{x}_{N+1}^* - \mathbf{x}^*} + \norm{G_{n+1}(\mathbf{x}_0) - \mathbf{x}_{n+1}^*} < \epsilon
\end{displaymath}
This completes the proof.
\end{proof}
This theorem explains, as special cases, the families of functions defined in \cref{eqn:Ex1} and \cref{eqn:Ex2}. In the next section, we use it to resolve the counter-example discussed in \cref{sec:CounterExample}. 
 
\section{Resolution of the Counter Example}
Returning to the example in \cref{sec:CounterExample}, let $\mathbf{x}^+ = \langle{a,b}\rangle$, thus generalizing the initial condition. In this case, \cref{eqn:Update} can be written as:
\begin{equation}
f_k(\mathbf{x}) = \begin{bmatrix}
\frac{(\rho +1) (a+\rho  x_2)}{\rho ^2+2 \rho +1} \\
 \frac{(\rho +1) (b+\rho  x_1)}{\rho ^2+2 \rho +1}
\end{bmatrix}.
\end{equation}
For each $f_k(\mathbf{x})$ the explicit fixed point is given by:
\begin{equation}
\mathbf{x}_k^* = \begin{bmatrix}
x_1^*\\
x_2^*
\end{bmatrix} = \begin{bmatrix}
\frac{a + \rho(a+b) }{2 \rho +1}\\
\frac{b+ \rho(a+b)}{2 \rho +1}
\end{bmatrix}.
\end{equation}
Since $\langle{s_1,s_2}\rangle = \langle{1,1}\rangle$, we see that:
\begin{displaymath}
\lim_{k \to \infty} \mathbf{x}_k^* = \mathbf{x}^* = \begin{bmatrix}
\tfrac{1}{2}\\\tfrac{1}{2}
\end{bmatrix},
\end{displaymath}
as expected.

Because this example is particularly simple, we can compute (see \cref{sec:Norms}):
\begin{equation}
\norm{F(\mathbf{x}) - \mathbf{x}_k^*}^2 = \left(\frac{\rho_k}{1+\rho_k}\right)^2\norm{\mathbf{x} - \mathbf{x}_k^*}^2,
\label{eqn:FixedPointNorm}
\end{equation}
for any $\mathbf{x}$. In \cref{thm:Contraction}, we now have:
\begin{displaymath}
\alpha_k = \frac{\rho_k}{1+\rho_k}.
\end{displaymath}
When $\rho_k = 2^{\sqrt{k}}$, then:
\begin{displaymath}
\prod_{k=1}^\infty \alpha_k \approx 0.0310128 > 0,
\end{displaymath}
which implies (as expected) that the system may not converge to the fixed point (see \cref{fig:Oscillation}). On the other hand, when $\rho_k = k$,
\begin{displaymath}
\prod_{k=1}^\infty \alpha_k = 0
\end{displaymath}
ensuring that the system will converge to the weighted average of $\mathbf{x}^+$. Thus we can correct Theorem 2 of \cite{SGSR18} to read:
\begin{theorem}[Clarification of Theorem 2 of \cite{SGSR18}] Let 
\begin{displaymath}
f_k(\mathbf{x}) = \left(\mathbf{S} + \rho^{(k)}\mathbf{D}\right)^{-1}\left(\mathbf{S}\mathbf{x}^{+} + \rho^{(k)}\mathbf{A}\mathbf{x}\right).
\end{displaymath}
Define:
\begin{displaymath}
\mathbf{x}_{k} = f_k(\mathbf{x}_{k-1}),
\end{displaymath}
with $\mathbf{x}_0$ given (and assumed to be $\mathbf{x}^+$). If $\rho_k \to \infty$ and the resulting contraction constants $\alpha_k$ of $f_k$ satisfy:
\begin{displaymath}
\prod_{k=1}^\infty \alpha_k = 0,
\end{displaymath}
then
\begin{displaymath}
\lim_{k \to \infty} \mathbf{x}_k = \frac{\sum_i s_i\mathbf{x}_{k_i}}{\sum_i s_i}.
\end{displaymath}\hfill\qed
\label{thm:FixedThm}
\end{theorem}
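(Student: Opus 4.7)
The plan is to reduce this clarification of Theorem 2 of \cite{SGSR18} to a direct application of \cref{thm:Contraction}, since that theorem has already done the hard analytic work. What remains is bookkeeping: verify each hypothesis of \cref{thm:Contraction} in turn, using the lemmas and theorems already established in \cite{SGSR18} for this specific family of update maps, and then identify the resulting limit with the stubbornness weighted average on the right-hand side of the conclusion.

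First I would invoke Lemma 3 of \cite{SGSR18}, which establishes that, for each fixed $k$, the map $f_k$ is a contraction on the Banach space $\mathbb{R}^N$ with some constant $\alpha_k < 1$ and unique fixed point
\[
\mathbf{x}_k^* = \left(\mathbf{S} + \rho_k\mathbf{L}\right)^{-1}\mathbf{S}\mathbf{x}^{+},
\]
where $\mathbf{L} = \mathbf{D} - \mathbf{A}$. This yields the pointwise estimate $\norm{f_k(\mathbf{x}) - \mathbf{x}_k^*} \leq \alpha_k \norm{\mathbf{x} - \mathbf{x}_k^*}$ required by the first hypothesis of \cref{thm:Contraction}. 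Next I would invoke Theorem 1 of \cite{SGSR18} (equivalently \cref{eqn:ConvergencePoint}), which gives $\mathbf{x}_k^* \to \mathbf{x}^*$ as $k \to \infty$ under $\rho_k \to \infty$, supplying the second hypothesis with the limit point being exactly the stubbornness weighted average. The assumption $\prod_{k=1}^\infty \alpha_k = 0$ is the remaining hypothesis verbatim. Applying \cref{thm:Contraction} with $\mathbf{x}_0 = \mathbf{x}^+$ then yields $\lim_{k\to\infty}(f_k\circ\cdots\circ f_1)(\mathbf{x}_0) = \mathbf{x}^*$, and identifying $\mathbf{x}_k = (f_k\circ\cdots\circ f_1)(\mathbf{x}_0)$ from the given recursion completes the proof.

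The main obstacle is not in any individual verification, since each is essentially a quotation of a known result, but rather in connecting the abstract hypothesis $\prod \alpha_k = 0$ to something checkable in terms of the original parameters $\rho_k$, $\mathbf{A}$, $\mathbf{S}$, and $\mathbf{D}$. The counter-example of \cref{sec:CounterExample} is instructive here: in that special case one can read off $\alpha_k = \rho_k/(1+\rho_k)$ from \cref{eqn:FixedPointNorm}, so $\alpha_k \to 1$ at a rate governed by how quickly $\rho_k \to \infty$. Too rapid growth of $\rho_k$ pushes $\alpha_k$ to $1$ fast enough that the infinite product is positive, while moderate growth such as $\rho_k = k$ keeps the product zero. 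In the general case, extracting a closed form (or even a sharp asymptotic) for $\alpha_k$ from Lemma 3 of \cite{SGSR18} is therefore the one nontrivial piece of work needed if one wants to apply \cref{thm:FixedThm} to a concrete graph, and it is precisely this rate information that the original appeal to \cref{lem:Contraction2} threw away.
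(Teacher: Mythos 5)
Your proposal is correct and follows exactly the route the paper intends: the theorem is stated with a terminal \qed precisely because it is an immediate corollary of \cref{thm:Contraction}, with the contraction property and unique fixed point supplied by Lemma 3 of \cite{SGSR18}, the convergence of the fixed points $\mathbf{x}_k^*$ to the stubbornness-weighted mean supplied by Theorem 1 of \cite{SGSR18} (i.e., \cref{eqn:ConvergencePoint}), and the product condition assumed as a hypothesis. Your closing observation that the real work lies in estimating $\alpha_k$ in terms of $\rho_k$ for a given graph is also the point the paper makes via the explicit computation $\alpha_k = \rho_k/(1+\rho_k)$ for $K_2$.
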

As a final remark, we note that this result could be anticipated from the convergence rate analysis in \cite{SGSR18}, which shows that \cref{eqn:Update} is an instance of gradient descent. Convergence guarantees for such an algorithm require satisfaction of the Wolfe Conditions and several pathological examples exist in which the step length (governed by $\rho_k$) is improperly defined leading to oscillation in gradient descent (see \cite{Bert99}). Thus, one might view \cref{thm:FixedThm} as a specialized sufficient condition on step length in this gradient descent. 

\section{Conclusions}
In this technical note, we corrected and clarified Theorem 2 of \cite{SGSR18}. This correction is important because the proof method has been used by other authors \cite{HZLR20}. The correction is based on replacing a lemma (Lemma 5) used in \cite{SGSR18} with a new variation on the Banach Fixed Point theorem. The modified theorem(s) now ensure results in \cite{SGSR18} and \cite{HZLR20} can be used for development of consensus systems or for their further study.

\section*{Acknowledgement}
C.G. was supported in part by the National Science Foundation under grant CMMI-1932991.

\appendix
\section{Computation of the Norms}\label{sec:Norms}
Computing directly we have:
\begin{displaymath}
\norm{f_k(\mathbf{x}) - \mathbf{x}_k^*}^2 = 
\left(\frac{a+\rho  x_2}{\rho +1}-\frac{a + \rho(a+b)}{2 \rho
   +1}\right)^2+\\
\left(\frac{b+\rho  x_1}{\rho +1}-\frac{b+ \rho(a+b)}{2 \rho
   +1}\right)^2
\end{displaymath}
We also have:
\begin{displaymath}
\norm{\mathbf{x} - \mathbf{x}_k^*}^2 = 
\left(x_1-\frac{\rho  (a+b)+a}{2 \rho +1}\right)^2+\\
\left(x_2-\frac{\rho  (a+b)+b}{2
   \rho +1}\right)^2.
\end{displaymath}
Dividing these expressions into each other and simplifying\footnote{Using Mathematica\texttrademark.} yields \cref{eqn:FixedPointNorm}.

\bibliographystyle{unsrt}
\bibliography{ConvergenceClarification}

\begin{thebibliography}{10}

\bibitem{SGSR18}
Justin Semonsen, Christopher Griffin, Anna Squicciarini, and Sarah Rajtmajer.
\newblock Opinion dynamics in the presence of increasing agreement pressure.
\newblock {\em IEEE transactions on cybernetics}, 49(4):1270--1278, 2018.

\bibitem{HZLR20}
Guang He, Wenbing Zhang, Jing Liu, and Haoyue Ruan.
\newblock Opinion dynamics with the increasing peer pressure and prejudice on
  the signed graph.
\newblock {\em Nonlinear Dynamics}, pages 1--13, 2020.

\bibitem{AVTM18}
Hyo-Sung Ahn, Quoc Van~Tran, Minh~Hoang Trinh, Kevin~L Moore, Mengbin Ye, and
  Ji~Liu.
\newblock Cooperative opinion dynamics on multiple interdependent topics:
  Modeling and analysis.
\newblock {\em arXiv preprint arXiv:1807.04406}, 2018.

\bibitem{DMM18}
Mingjun Du, Baoli Ma, and Deyuan Meng.
\newblock Edge convergence problems on signed networks.
\newblock {\em IEEE transactions on cybernetics}, 49(11):4029--4041, 2018.

\bibitem{CQLQ19}
Zihan Chen, Jiahu Qin, Bo~Li, Hongsheng Qi, Peter Buchhorn, and Guodong Shi.
\newblock Dynamics of opinions with social biases.
\newblock {\em Automatica}, 106:374--383, 2019.

\bibitem{LLJD19}
Haiming Liang, Cong-Cong Li, Guoyin Jiang, and Yucheng Dong.
\newblock Preference evolution model based on wechat-like interactions.
\newblock {\em Knowledge-Based Systems}, 185:104998, 2019.

\bibitem{S19}
Yilun Shang.
\newblock Resilient consensus for expressed and private opinions.
\newblock {\em IEEE Transactions on Cybernetics}, 2019.

\bibitem{ZLSR19}
Haoti Zhong, Hao Li, Anna Squicciarini, Sarah Rajtmajer, and David Miller.
\newblock Toward image privacy classification and spatial attribution of
  private content.
\newblock In {\em 2019 IEEE International Conference on Big Data (Big Data)},
  pages 1351--1360. IEEE, 2019.

\bibitem{ZWAH20}
Qinyue Zhou, Zhibin Wu, Abdulrahman~H Altalhi, and Francisco Herrera.
\newblock A two-step communication opinion dynamics model with self-persistence
  and influence index for social networks based on the degroot model.
\newblock {\em Information Sciences}, 519:363--381, 2020.

\bibitem{AVTY20}
Hyo-Sung Ahn, Quoc Van~Tran, Minh~Hoang Trinh, Mengbin Ye, Ji~Liu, and Kevin~L
  Moore.
\newblock Opinion dynamics with cross-coupling topics: Modeling and analysis.
\newblock {\em IEEE Transactions on Computational Social Systems}, 2020.

\bibitem{DeGroot74}
M.~H. DeGroot.
\newblock Reaching a consensus.
\newblock {\em J. American Stat. Association}, 69:118--121, 1974.

\bibitem{Krause00}
U.~Krause.
\newblock A discrete nonlinear and non-autonomous model of consensus formation.
\newblock In Gordon and Breach, editors, {\em In Communications in Difference
  Equations}, pages 227-- 236, 2000.

\bibitem{BHOT05}
V.~D. Blondel, J.~M. Hendrickx, A.~Olshevsky, and J.~N. Tsitsiklis.
\newblock Convergence in multiagent coordination, consensus, and flocking.
\newblock In {\em Proceedings of the 44th IEEE Conference on Decision and
  Control}, pages 2996--3000, Dec 2005.

\bibitem{OFM07}
R.~Olfati-Saber, J.~A. Fax, and R.~M. Murray.
\newblock Consensus and cooperation in networked multi-agent systems.
\newblock {\em Proceedings of the IEEE}, 95(1):215--233, Jan 2007.

\bibitem{BHT09}
V.~D. Blondel, J.~M. Hendrickx, and J.~N. Tsitsiklis.
\newblock On krause's multi-agent consensus model with state-dependent
  connectivity.
\newblock {\em IEEE Transactions on Automatic Control}, 54(11):2586--2597, Nov
  2009.

\bibitem{CFT12}
C.~Canuto, F.~Fagnani, and P.~Tilli.
\newblock An eulerian approach to the analysis of krause's consensus models.
\newblock {\em SIAM J. Contr. and Opt.}, pages 243--265, 2012.

\bibitem{JM14}
Pierre-Emmanuel Jabin and Sebastien Motsch.
\newblock Clustering and asymptotic behavior in opinion formation.
\newblock {\em Journal of Differential Equations}, 257(11):4165 -- 4187, 2014.

\bibitem{MT14}
S.~Motsch and E.~Tadmor.
\newblock {Heterophilious Dynamics Enhances Consensus}.
\newblock {\em SIAM Review}, 56:577--621, 2014.

\bibitem{Bindel2015248}
David Bindel, Jon Kleinberg, and Sigal Oren.
\newblock How bad is forming your own opinion?
\newblock {\em Games and Economic Behavior}, 92:248 -- 265, 2015.

\bibitem{7222408}
Y.~Wu, H.~Su, P.~Shi, Z.~Shu, and Z.~G. Wu.
\newblock Consensus of multiagent systems using aperiodic sampled-data control.
\newblock {\em IEEE Transactions on Cybernetics}, 46(9):2132--2143, Sept 2016.

\bibitem{7314902}
J.~He, M.~Zhou, P.~Cheng, L.~Shi, and J.~Chen.
\newblock Consensus under bounded noise in discrete network systems: An
  algorithm with fast convergence and high accuracy.
\newblock {\em IEEE Transactions on Cybernetics}, 46(12):2874--2884, Dec 2016.

\bibitem{7457653}
C.~Hua, X.~You, and X.~Guan.
\newblock Adaptive leader-following consensus for second-order time-varying
  nonlinear multiagent systems.
\newblock {\em IEEE Transactions on Cybernetics}, 47(6):1532--1539, June 2017.

\bibitem{7763766}
W.~Meng, Q.~Yang, J.~Si, and Y.~Sun.
\newblock Consensus control of nonlinear multiagent systems with time-varying
  state constraints.
\newblock {\em IEEE Transactions on Cybernetics}, 47(8):2110--2120, Aug 2017.

\bibitem{7523325}
Y.~Chen and Y.~Shi.
\newblock Consensus for linear multiagent systems with time-varying delays: A
  frequency domain perspective.
\newblock {\em IEEE Transactions on Cybernetics}, 47(8):2143--2150, Aug 2017.

\bibitem{7932950}
X.~Xu, L.~Liu, and G.~Feng.
\newblock Consensus of heterogeneous linear multiagent systems with
  communication time-delays.
\newblock {\em IEEE Transactions on Cybernetics}, 47(8):1820--1829, Aug 2017.

\bibitem{7486092}
X.~Ge and Q.~L. Han.
\newblock Consensus of multiagent systems subject to partially accessible and
  overlapping markovian network topologies.
\newblock {\em IEEE Transactions on Cybernetics}, 47(8):1807--1819, Aug 2017.

\bibitem{7775015}
H.~Rezaee and F.~Abdollahi.
\newblock Consensus problem over high-order multiagent systems with uncertain
  nonlinearities under deterministic and stochastic topologies.
\newblock {\em IEEE Transactions on Cybernetics}, 47(8):2079--2088, Aug 2017.

\bibitem{7929330}
Y.~Wei and G.~P. Liu.
\newblock Consensus tracking of heterogeneous discrete-time networked
  multiagent systems based on the networked predictive control scheme.
\newblock {\em IEEE Transactions on Cybernetics}, 47(8):2173--2184, Aug 2017.

\bibitem{7891027}
H.~Li, C.~Huang, G.~Chen, X.~Liao, and T.~Huang.
\newblock Distributed consensus optimization in multiagent networks with
  time-varying directed topologies and quantized communication.
\newblock {\em IEEE Transactions on Cybernetics}, 47(8):2044--2057, Aug 2017.

\bibitem{7906607}
Q.~Song, F.~Liu, G.~Wen, J.~Cao, and X.~Yang.
\newblock Distributed position-based consensus of second-order multiagent
  systems with continuous/intermittent communication.
\newblock {\em IEEE Transactions on Cybernetics}, 47(8):1860--1871, Aug 2017.

\bibitem{7582369}
L.~Xing, C.~Wen, F.~Guo, Z.~Liu, and H.~Su.
\newblock Event-based consensus for linear multiagent systems without
  continuous communication.
\newblock {\em IEEE Transactions on Cybernetics}, 47(8):2132--2142, Aug 2017.

\bibitem{7891010}
L.~Zou, Z.~Wang, H.~Gao, and F.~E. Alsaadi.
\newblock Finite-horizon ${\mathcal h}_{\infty }$ consensus control of
  time-varying multiagent systems with stochastic communication protocol.
\newblock {\em IEEE Transactions on Cybernetics}, 47(8):1830--1840, Aug 2017.

\bibitem{7831431}
C.~Hua, Y.~Li, and X.~Guan.
\newblock Leader-following consensus for high-order nonlinear stochastic
  multiagent systems.
\newblock {\em IEEE Transactions on Cybernetics}, 47(8):1882--1891, Aug 2017.

\bibitem{7496995}
Z.~Zhang, L.~Zhang, F.~Hao, and L.~Wang.
\newblock Leader-following consensus for linear and lipschitz nonlinear
  multiagent systems with quantized communication.
\newblock {\em IEEE Transactions on Cybernetics}, 47(8):1970--1982, Aug 2017.

\bibitem{7887685}
Z.~Yu, H.~Jiang, C.~Hu, and J.~Yu.
\newblock Necessary and sufficient conditions for consensus of fractional-order
  multiagent systems via sampled-data control.
\newblock {\em IEEE Transactions on Cybernetics}, 47(8):1892--1901, Aug 2017.

\bibitem{7605476}
K.~Liu, Z.~Ji, and W.~Ren.
\newblock Necessary and sufficient conditions for consensus of second-order
  multiagent systems under directed topologies without global gain dependency.
\newblock {\em IEEE Transactions on Cybernetics}, 47(8):2089--2098, Aug 2017.

\bibitem{7562413}
L.~Ding and W.~X. Zheng.
\newblock Network-based practical consensus of heterogeneous nonlinear
  multiagent systems.
\newblock {\em IEEE Transactions on Cybernetics}, 47(8):1841--1851, Aug 2017.

\bibitem{7588177}
J.~Qin, W.~Fu, W.~X. Zheng, and H.~Gao.
\newblock On the bipartite consensus for generic linear multiagent systems with
  input saturation.
\newblock {\em IEEE Transactions on Cybernetics}, 47(8):1948--1958, Aug 2017.

\bibitem{7559818}
W.~Hu, L.~Liu, and G.~Feng.
\newblock Output consensus of heterogeneous linear multi-agent systems by
  distributed event-triggered/self-triggered strategy.
\newblock {\em IEEE Transactions on Cybernetics}, 47(8):1914--1924, Aug 2017.

\bibitem{Lorentzen90}
L.~Lorentzen.
\newblock Compositions of contractions.
\newblock {\em J. Computational and Applied Mathematics}, 32:169--178, 1990.

\bibitem{G91}
John Gill.
\newblock The use of the sequence $f_n(z)=f_n\circ \cdots \circ f_1(z)$ in
  computing fixed points of continued fractions, products, and series.
\newblock {\em Applied Numerical Mathematics}, 8(6):469 -- 476, 1991.

\bibitem{Bert99}
D.~P. Bertsekas.
\newblock {\em {Nonlinear Programming}}.
\newblock Athena Scientific, 2 edition, 1999.

\end{thebibliography}
 
\end{document}